\algrenewcommand\alglinenumber[1]{\tiny #1:}
\numberwithin{equation}{section}
\newcommand{\E}{\mathbb{E}}
\newcommand{\PP}{\mathbb{P}}
\newcommand{\field}[1]{\mathbbm{#1}}
\newcommand{\R}{\field{R}}
\newcommand{\Ult}{\mathbf{U}^{\leq t}}
\newcommand{\Ulk}{\mathbf{U}^{\leq k}}
\newcommand{\Glt}{\mathbf{G}^{\le t}}
\newcommand{\Flt}{\mathbf{F}^{\le t}}
\begin{document}

\begin{frontmatter}

\title{Probabilistic verification of partially observable dynamical systems\footnote{This is an expanded version of a paper submitted originally in January 2014. Parts of this work are based on Benjamin M. Gyori, \emph{Probabilistic approaches to modeling uncertainty in biological pathway dynamics}, PhD thesis, National University of Singapore, 2014.}}
\titlerunning{Probabilistic verification of partially observable dynamical systems}

\author{Benjamin M. Gyori \and Daniel Paulin \and Sucheendra K. Palaniappan}

\authorrunning{BM. Gyori \and D. Paulin \and SK. Palaniappan}

\author{Benjamin M. Gyori \inst{1}
\and Daniel Paulin\inst{2}
\and Sucheendra K. Palaniappan \inst{3}}
\institute{Department of Systems Biology, Harvard Medical School, USA
\and Department of Statistics and Applied Probability, National University of Singapore, Singapore \and 
INRIA Rennes, France}

\maketitle

\vspace{-0.2in}

\begin{abstract}
The construction and formal verification of dynamical models is important in engineering, biology and other disciplines. We focus on non-linear models containing a set of parameters governing their dynamics. The value of these parameters is often unknown and not directly observable through measurements, which are themselves noisy. When treating parameters as random variables, one can constrain their distribution by conditioning on observations and thereby constructing a posterior probability distribution. We aim to perform model verification with respect to this posterior. The main difficulty in performing verification on a model under the posterior distribution is that in general, it is difficult to obtain \emph{independent} samples from the posterior, especially for non-linear dynamical models. Standard statistical model checking methods require independent realizations of the system and are therefore not applicable in this context. 

We propose a Markov chain Monte Carlo based statistical model checking framework, which produces a sequence of dependent random realizations of the model dynamics over the parameter posterior. Using this sequence of samples, we use statistical hypothesis tests to verify whether the model satisfies a bounded temporal logic property with a certain probability. We use sample size bounds tailored to the setting of dependent samples for fixed sample size and sequential tests. We apply our method to a case-study from the domain of systems biology, to a model of the JAK-STAT biochemical pathway. The pathway is modeled as a system of non-linear ODEs containing a set of unknown parameters. Noisy, indirect observations of the system state are available from an experiment. The results show that the proposed method enables probabilistic verification with respect to the parameter posterior with specified error bounds.
\end{abstract}

\end{frontmatter}

\section{Introduction}
Dynamical systems are used to model the evolution of a system's state in time, and are widely used in science and engineering. The parameters describing the dynamics of these systems are often unknown and one has to condition on noisy data to infer their values. It is then of interest to perform probabilistic verification on a model under this form of \emph{posterior} uncertainty. 
This verification problem has not been addressed in the context of non-linear dynamical systems.  The approximate probabilistic verification of dynamical systems usually involves simulating independent realizations of the system dynamics. However, when conditioning on observations through a posterior, obtaining independent realizations will not be possible, except in some very restricted cases. In this paper we propose a novel method to perform probabilistic verification approximately (but with statistical guarantees) in this context. 

Properties about dynamical systems can be formally expressed as formulas in temporal logic. Using temporal logic, one can conveniently describe both qualitative and quantitative dynamical properties of interest. The technique of \emph{model checking}~\cite{clarkebook} is used to automatically verify if a model satisfies these properties. Model checking has been used for the analysis of dynamical systems in domains including embedded systems~\cite{alur1996automatic} and systems biology~\cite{fisher2007predictive,statmodelcheckTcell,Heath2008}. Both temporal logic and model checking techniques have also been extended to analyze dynamical systems with a component of stochasticity. In this context one aims to verify if a property is satisfied with a certain probability.

Statistical probabilistic verification aims to check whether a dynamical system $\mathcal{S}$ satisfies a temporal logic property $\psi$ with probability at least $r$, or more formally, whether $\mathcal{S} \models \PP_{\ge r}(\psi)$. For a particular realization (also called a \emph{trajectory}) of the system, $\psi$ is either satisfied or not. By imposing a probability measure over the set of trajectories, one can define the probability of satisfaction of $\psi$, denoted $P_{\psi}$. This probability is compared to a threshold $r$, and the verification problem can be posed as a hypothesis test between $H_0: P_{\psi} \ge r+\delta$ and $H_1: P_{\psi} \le r-\delta$, with $\delta$ being a chosen indifference region~\cite{younes06}. The hypothesis test is usually solved using statistical approximations based on repeated simulation of the system \cite{legay2010statistical}.

Here we focus on dynamical systems modeled as a set of coupled ordinary differential equations (ODEs). The dynamics of ODEs is governed by a set of kinetic parameters whose value is often not known and not directly observable. The uncertainty in the parameter values can be represented conveniently by a probability distribution. As the parameter values determine the system dynamics, the parameter distribution also induces a probability measure over the possible realizations of the system dynamics. 

In practice, one often has access to a set of \emph{observations} about a dynamical system's state in time. For instance, when modeling the dynamics of biochemical pathways, one can usually measure the concentration of some molecular species at a few discrete time points. Partial observability arises when the full state of the underlying system cannot be exactly determined through observations. The existence of measurement noise, or the fact that not all components of the system can be measured thus results in partial observability. By conceptually treating model parameters as part of the state, the fact that parameter values cannot be directly measured also implies partial observability. While observations will not reveal the exact value of parameters, one can \emph{condition on} observations to constrain the distribution of the parameters. Following the Bayesian terminology we refer to the probability distribution of parameters conditioned on a set of observations as the \emph{posterior} distribution \cite{box1973bayesian}. 

Sampling independent system trajectories according to a prior distribution is usually straightforward. This allows statistical model checking subject to prior uncertainty, as in \cite{palaniappan2013}. However, obtaining \emph{independent} samples from a posterior distribution is challenging in all but very special cases. The posterior is proportional to the product of the likelihood of the observations and  the prior probability of the parameters. However, evaluating the normalizing constant is not feasible in practice, and sampling independently from the posterior is not possible in general. Approximate probabilistic verification relying on independent samples cannot be used in this setting, and we are not aware of any previous work that has addressed this limitation. 

Here we develop a methodology for the probabilistic verification of a model defined in terms of such a posterior distribution. The method relies on taking a sequence of \emph{dependent} samples from the set on which the posterior distribution is defined (in our case, this is the set of model parameters) using a Markov chain. The Markov chain is designed so that the sequence of states of the chain are samples from the posterior distribution. This method is called Markov chain Monte Carlo (MCMC) \cite{GilksMCMC}. The sequence of samples obtained using MCMC can be used to generate realizations of the system dynamics and to calculate the empirical ratio of realizations for which a temporal logic property is satisfied. However, since these samples are not independent, the standard analysis used to bound the errors on the performed hypothesis tests (as in \cite{herault2004approximate,younes06}) is no longer applicable. 

We rely on recent results in hypothesis testing to bound the number of samples needed to do statistical model checking when one has to rely on dependent samples collected using MCMC \cite{gyori2014hypothesis}. We use these error bounds for the case when one performs the hypothesis test based on a \emph{fixed sample size} as well as the case of \emph{sequential hypothesis testing}, where sample size is not fixed in advance. These tests are similar in nature to ones used in case of independent samples (see \cite{herault2004approximate,younes06}), but are tailored to the case of posterior verification. 

Probabilistic verification on dynamical models of biochemical pathways (including ODE models, as well as discrete or continuous time Markov chains) is an active and increasingly important field \cite{brim2013model}. In pathway models, the value of relevant kinetic parameters is rarely known, and inferring parameters based on observations is an important and difficult problem. Due to the richer analysis it enables, Bayesian inference is increasingly adopted for treating model parameters as random variables and making predictions with respect to their posterior distribution \cite{learningsysbbook,eydgahi2013properties}. However, as of now, probabilistic verification with respect to posterior parameter distributions has not been demonstrated in the context of pathway models. 
We illustrate our method on an ODE model of the JAK-STAT biochemical pathway. The system of ODEs describing the pathway contain parameters whose values are unknown, and noisy and indirect measurements of the system state are available from biological experiments. We are interested in formally verifying the dynamical properties of this system with respect to bounded temporal logic properties. We show that using our method, probabilistic verification is possible with specified error bounds. 

In the next section, we introduce ODE models, their dynamics, and the notion of partial observability. In Section 3, we introduce our temporal logic specification for expressing dynamical properties. In Section 4, we describe our main algorithms for performing statistical model checking. In Section 5 we apply our method on a case study from the domain of systems biology. Section 6 concludes our paper with possible extensions in the future. 


\section{ODE models and partial observability}\label{sec:ode}
Systems of ODEs are commonly used for modeling dynamics in a wide variety of disciplines, including systems biology \cite{aldridge2006physicochemical,klipp2005systems}. An ODE system describes the time-derivative of a set of variables $\mathbf{x}(t) \in \R^{d_x}$ through a system of (possibly non-linear) equations. We also allow for a set of input variables $\mathbf{u}(t) \in R^{d_u}$, and explicitly include a model for observing the state of the system through output variables $\mathbf{y}(t)  \in R^{d_y}$. The equations are stated as follows. 
\begin{align}\label{dynsysmodel}
\dot{\mathbf{x}}(t) &= f(\mathbf{x}(t),\mathbf{u}(t),\theta) \nonumber \\
\mathbf{y}(t) &= g(\mathbf{x}(t)) + \mathbf{w}(t).
\end{align}
Here $\theta \in \R^{d_{\theta}}$ is a vector of model parameters and $\mathbf{w} \in R^{d_y}$ denotes the noise component of observations. We assume that the form of the functions $f,g$ and the probability distribution of $\mathbf{w}$ are known. To simulate the model, initial conditions $\mathbf{x}(0)$ need to be set, and throughout the rest of the paper we assume that these are given. However, if this is not the case, initial conditions could also be treated as part of the set of unknown model parameters (as, for instance, in \cite{vanlier2012integrated}).  

The notion of \emph{partial observability} expresses that we do not have direct access to $\mathbf{x}(t)$, and can only observe the state indirectly through $\mathbf{y}(t)$. Observations only provide partial information about the underlying system for any of the following reasons: (i) observations are noisy (ii) not all state variables can be observed (iii) observations do not map uniquely to specific states. The concept of partial observability can also be extended to the set of parameters $\theta$. In this case $\mathbf{y}(t)$ provides indirect information on $\theta$ only through observing $\mathbf{x}(t)$.

We constrain model parameters (whose values are not exactly known) to be in a set $\Theta \subset \R^{d_{\theta}}$, and for simplicity define this set as the hypercube arising by constraining parameter $\theta_i$ to the interval $[a_i,b_i]$, where $a_i < b_i \in \R$, $1 \le i \le d_{\theta}$. The set of possible parameter values will thus be $\Theta = [a_1,b_1] \times [a_2,b_2] \times \ldots \times [a_{d_{\theta}},b_{d_{\theta}}]$.

Importantly, we assume that a prior probability density $p_0(\theta)$ is given over $\Theta$. One can use the prior to encode existing knowledge about the joint distribution of parameters. In the simplest case, $p_0(\theta)$ will be uniform over $\Theta$, defined as
\begin{equation}
p_0(\theta) = 
\begin{cases}
c & \text{if } \theta \in \Theta \\
0 & \text{otherwise.}
\end{cases}
\end{equation}
Here $c = \left( \prod_{i=1}^{d_{\theta}} (b_i - a_i)\right)^{-1}$ is a constant ensuring that $p_0$ integrates to $1$.

Now assume that we have a set of observations $Y$ obtained by gathering instances of the output $\mathbf{y}$. The set of observations contains vector values of $\mathbf{y}(t)$ at a finite, discrete set of time steps: $Y = \{\mathbf{y}(t_1), \mathbf{y}(t_2), \ldots \mathbf{y}(t_{\ell})\}$, and we denote by $Y_{i,j}$ the $i$th component of the vector $\mathbf{y}(t_j)$. 
The observation process $\mathbf{y}(t)$ is indirectly dependent on the model parameters, and therefore the set of measurements $Y$ contains indirect information on the value of parameters. 

Our goal is to constrain the probability distribution over the model parameters, and construct a posterior distribution by conditioning on the set of observations \cite{box1973bayesian}. We denote the posterior distribution of the parameters $\pi(\theta|Y)$, which, by the Bayes theorem can be expanded to
\begin{equation}\label{eqposterior}
\pi(\theta|Y) = \frac{p(Y|\theta)p_0(\theta)}{p(Y)} = \frac{p(Y|\theta)p_0(\theta)}{\int_{\Theta}p(Y|\theta)p_0(\theta)d\theta}.
\end{equation}
In the above equation $p(Y|\theta)$ is the probability of an observation conditioned on $\theta$. However, since $Y$ is fixed throughout the analysis, $p(Y|\theta)$ is considered a function of $\theta$, and it is commonly referred to as the \emph{likelihood}. The form of the likelihood function is known due to the fact that the noise component $\mathbf{w}$ is of a known distribution. In many applications, $\mathbf{w}$ is a vector of $d_{y}$ independent Gaussian random variables. In this special case, for a particular $\theta^n$ we have
\begin{equation}\label{eq:guasslh}
p(Y|\theta^n) = \prod_{i=1}^{d_y}\prod_{j=1}^{\ell} P(Y_{i,j}|\theta^n) = C \exp \left(-\sum_{i=1}^{d_y}\sum_{j=1}^{\ell}\left(\frac{Y_{i,j}-y_{i}(t_j)|_{\theta^n}}{\sqrt{2}\sigma_{i,j}}\right)^2\right),
\end{equation}
where $y_{i}(t_j)|_{\theta^n}$ denotes the $i$th component of the output of the model when using parameters $\theta^n$, $\sigma_{i,j}$ is the standard deviation of data point $Y_{i,j}$, and $C$ is a normalization constant. 

Given a particular set of parameters, evaluating the prior is straightforward. Calculating the likelihood requires simulating the system up to the time point $t_{\ell}$, and evaluating the obtained trajectories against the measurement data. One can use the same concept if observations are given for multiple measurement conditions by simulating for each condition to evaluate the joint likelihood. The main difficulty in dealing with posteriors is posed by the factor $p(Y)$ in \eqref{eqposterior}, which is usually intractable to evaluate in practice. The fact that the form of the posterior is hard to represent essentially prevents the use of independent samples from the posterior. The proposed MCMC method (described in Section \ref{sec:smcmcmc}), provides a sequence of dependent samples from the posterior in a way that the factor $p(Y)$ need not be evaluated. 

In the next section we introduce the temporal logic used to formalize properties on the realizations of the dynamical system.

\section{Expressing dynamical properties using PBLTL} 
To specify the dynamical properties of a single realization of the system, we first encode them as formulas in a specification logic. We assume that we are concerned with analyzing the dynamics of the system only up to a maximal time point $\tau$. We use a bounded version of linear time temporal logic (BLTL)\cite{clarkebook} for this. The formulas in this logic would be interpreted at a finite set of time points $\mathcal{T} = \{0, 1, \ldots, \tau\}$ corresponding to all the relevant time points of interest. 

In our setting  a trajectory is represented by $\varsigma_{\theta}$,  which (given fix initial conditions) is fully defined by the choice of parameters ${\theta}$ since the ODE system is deterministic. A trajectory will be defined by the set of states $\varsigma_{\theta}$ = ($\mathbf{x}(0)|_{\theta}$, $\mathbf{x}(1)|_{\theta}$,$\ldots$ , $\mathbf{x}(\tau)|_{\theta}$), where $\mathbf{x}(i)|_{\theta}$ is the value of system variables at time point $i$ when the corresponding ODEs are simulated with the parameter set $\theta$. $\varsigma_{\theta}(t) = \mathbf{x}(t)|_{\theta} $ for $t \in \mathcal{T}$. The transitions from  $\mathbf{x}(i)|_{\theta}$ to  $\mathbf{x}(i+1)|_{\theta}$ is ensured by the fact that once we fix the parameters values, the systems of ODEs has a unique solution and is characterized by a continuous function (for more details, see \cite{palaniappan2013}).

Atomic propositions in BLTL will be of the form $(i,L,U)$ with $L$ $\le$ $U$. This will be interpreted as ``the value of $x_i$ falls in the interval $[L, U]$". In Section \ref{sec:results}, for easier readability, we will use the $[L \le x_i \le U]$ notation with the same intended meaning. 

The syntax of formulas in BLTL are defined in a standard way: 
(i) Every atomic proposition is a BLTL formula.
(ii) The constants $\emph{true}$, $\emph{false}$ are BLTL formulas.
(iii) If $\psi$, $\psi'$ are BLTL formulas then $\lnot \psi$ and $\psi \vee \psi'$  are BLTL formulas. 
(iv) If $\psi$, $\psi'$ are BLTL formulas  then $\psi \Ult  \psi'$ is a  BLTL formula, where $t \leq \tau$ is a positive integer.

Derived operators such as $\wedge$, $\supset$, $\equiv$, $\Glt$, and  $\Flt$ are defined in the usual way. Qualitative properties of the system dynamics defined by the ODEs can be efficiently expressed using BLTL (see for instance \cite{palaniappan2013}).

The semantics of BLTL will be defined by $\varsigma_{\theta}, t \models \psi$ as follows.

\begin{itemize}
\item $\varsigma_{\theta}, t \models (i,L,U)$ iff $L \leq \varsigma_{\theta,i}(t) \leq U$ where $\varsigma_{\theta,i}(t)$ is the $i$th component of  $\varsigma_{\theta}(t)$.
\item $\varsigma_{\theta}, t \models \psi \vee \psi' $ iff  $\varsigma_{\theta}, t \models \psi $ or $\varsigma_{\theta}, t \models \psi' $.
\item $\varsigma_{\theta}, t \models \lnot \psi $ iff  $\varsigma_{\theta}, t \not\models \psi $.
\item $\varsigma_{\theta}, t \models \psi \Ulk \psi'$ iff there exists $k'$ such that $k' \leq k$,\,\, $t + k' \leq \tau$, $\varsigma, t + k' \models \psi'$ and  $\varsigma_{\theta}, t + k'' \models \psi$ for every $0 \leq k'' < k'$.
\end{itemize}

Under assumptions of continuity and measurability on the ODE equations, we can assign a probability to the trajectories satisfying a given formula $\psi$ with respect to the distribution of parameters (for a proof of the fact that this probability exists, see \cite{palaniappan2013}). We now define the probability of the system satisfying a formula $\psi$ as
\begin{equation}
P_{\psi} = \int_{\Theta} \pi(\theta|Y) I(\varsigma_\theta \models \psi) d\theta,
\end{equation}
where $I$ is the indicator function taking value $1$ if $\varsigma_\theta \models \psi$, and $0$ otherwise. 

To express properties of this nature, we will encode them in a formalism called PBLTL\cite{jha2009statistical}, which is a probabilistic extension of BLTL. Formulas in PBLTL are of the form  $\PP_{\ge r}(\psi) $ (or $\PP_{\le r}(\psi) $ ), where $\psi$ is a BLTL formula and $r$ is a real number in $(0,1)$. The PBLTL formula $\PP_{\ge r}(\psi)$ expresses that we want to verify whether the probability measure of the trajectories satisfying $\psi$ (or $P_{\psi}$) is at least $r$. The next section introduces our statistical framework for deciding approximately, but with statistical guarantees, whether the model satisfies properties expressed in PBLTL.

\section{Statistical model checking using MCMC}
\label{sec:smcmcmc}
In this section we develop the methodology for performing statistical model checking with respect to the Bayesian posterior distribution $\pi(\theta|Y)$.

Our goal is decide between the following two hypotheses.
\begin{align}\label{eqhypothesis}
H_0: \quad & P_{\psi} \ge r + \delta,  \\
H_1: \quad & P_{\psi} \le r - \delta, \nonumber
\end{align}
where $\PP_{\ge r}(\psi)$ is a PBLTL formula, $r \in (0,1)$ and $\delta \in (0,\min(r,1-r))$. 

\subsection{Markov chain construction}

In Section \ref{sec:ode} we have discussed that independent realizations of the system with respect to the posterior distribution cannot be obtained. We will therefore use a sequence of dependent samples from a Markov chain to decide between the hypotheses. 
We define a Markov chain whose state space is the space of parameters. The chain starts at an initial parameter sampled from the \emph{prior}. In each subsequent step of the chain, one first uses a \emph{proposal distribution} to pick the next candidate parameter, and then applies the \emph{acceptance ratio} to accept or reject the proposed candidate. At each step of the Markov chain, the trajectory corresponding to the current parameter values is verified, and these samples are used to perform probabilistic verification with respect to the posterior. The key idea is to design the Markov chain in a way that its stationary distribution matches the posterior $\pi(\theta|Y)$.

There are many possible ways to construct an adequate proposal distribution. We denote the proposal by $q(\theta^n \to \theta)$, which represents the probability of proposing $\theta$ if the current parameter value is $\theta^n$. We suggest using $q(\theta^n \to \theta) = \mathcal{N}(\theta^n,\Sigma_{\mathrm{MH}})$, a $d_{\theta}$-dimensional multivariate Gaussian with mean identical to the current parameter vector, and covariance matrix $\Sigma_{\mathrm{MH}}$. Here $\Sigma_{\mathrm{MH}}$ can be diagonal with entries $\sigma_{\mathrm{MH},1}^2, \ldots, \sigma_{\mathrm{MH},d_{\theta}}^2$, representing variances along each dimension independently. 
In practice it is important to choose the entries of the covariance matrix carefully, since it greatly affects the mixing properties of the chain. (For more details on constructing efficient proposal steps, such as adaptive schemes, we refer the reader to \cite{RobertsCasella}).
The acceptance ratio follows from the Metropolis-Hasting scheme, where the candidate is accepted with probability $\alpha$, in general, determined by the proposal and the posterior as follows.
\begin{equation}
\alpha = \min \left (1, \frac{q(\theta' \to \theta^n)}{q(\theta^n \to \theta')} \frac{\pi(\theta'|Y)}{\pi(\theta^n|Y)} \right) = \min \left (1, \frac{q(\theta' \to \theta^n)}{q(\theta^n \to \theta')} \frac{p_0(\theta')p(Y|\theta')}{p_0(\theta^n)p(Y|\theta^n)} \right ).
\end{equation} 
Note that the normalization constant ($p(Y)$) appearing in the posterior is eliminated, and one thus needs only evaluate the prior and the likelihood at the original and at the proposed parameter value.

The proposal and acceptance steps defined as above form an instance of the Metropolis-Hasting algorithm, which is proven to converge to the desired target distribution~\cite{GilksMCMC}. In practice, one takes an initial $t_0$ number of steps in the Markov chain, called the ``burn-in time", to ensure that the chain has sufficiently converged to the posterior. 

\subsection{Hypothesis tests}

We introduce the function getMCMCsample, which takes as input the current parameter values, takes a single step in the Markov chain, and returns the new parameter values. 

\begin{algorithm}[h]
\renewcommand{\thealgorithm}{}
\makeatletter\renewcommand{\ALG@name}{Function}
\caption{getMCMCsample}
\small
Input: parameter vector $\theta_{\mathrm{in}}$.
Output: parameter vector $\theta_{\mathrm{out}}$ 
\begin{algorithmic}[1]
   	\State Sample a new parameter vector based on proposal: $\theta' \sim q(\theta_{\mathrm{in}} \to \theta)$
   	
   	\State Calculate acceptance ratio $\alpha = \min \left (1, \frac{p_0(\theta')p(Y|\theta')q(\theta' \to \theta_{\mathrm{in}})}{p_0(\theta_{\mathrm{in}})p(Y|\theta_{\mathrm{in}})q(\theta_{\mathrm{in}} \to \theta')} \right )$
   	
   	\State Generate $\eta \sim \text{Uniform}[0,1]$
   	
   	\If {$\eta < \alpha$} 
   		\State \textbf{return} $\theta_{\mathrm{out}}:=\theta'$
   	\Else 
   		\State \textbf{return} $\theta_{\mathrm{out}}:=\theta_{\mathrm{in}}$
   	\EndIf
\end{algorithmic}
\end{algorithm}

We present two tests between the hypotheses in \eqref{eqhypothesis}. These tests use getMCMCsample as a subroutine. The first test assumes that we have fixed $N$, the total number of samples to collect, and thus a choice of either $H_0$ or $H_1$ is returned after exactly $N$ steps. 

\begin{algorithm}[h]
\renewcommand{\thealgorithm}{1}
\caption{Fixed sample size hypothesis test}
\small
\label{algfix}
Input: BLTL property $\psi$, threshold probability $r$, observations $Y$, number of samples $N$, number of burn-in steps $t_0$, prior $p_0$, proposal $q$.

Output: Choice of $H_0$ or $H_1$.
	\begin{algorithmic}[1]
		\State Sample initial parameter vector from the prior $\vartheta^0 \sim p_0(\theta)$
		\For{$i:=1 \ldots t_0$}
			\State $\vartheta^{i} :=$ getMCMCsample($\vartheta^{i-1}$)
		\EndFor
		
		\State Set $S := 0$ and $\theta^0 := \vartheta^{t_0}$
		\For{$n:=1 \ldots N$}
			\State $\theta^{n} :=$ getMCMCsample($\theta^{n-1}$)
			\State Simulate the trajectory $\varsigma_{\theta^n}$
			\If{$\varsigma_{\theta^n} \models \psi$}
				\State $S := S + 1$
			\EndIf
		\EndFor	
		
		\If{$S \ge Nr$}
			\State \textbf{return} $H_0$
		\Else
			\State \textbf{return} $H_1$
		\EndIf
		\item[]
		\item[]
	\end{algorithmic}
\end{algorithm}

The second test uses sequential hypothesis testing to adaptively set the number of steps before stopping (based on the result of verification on samples gathered so far). The stopping condition is governed by a threshold $M$. The value of $N$ (or $M$ respectively) is chosen depending on $r$, $\delta$ and the required Type-I and Type-II error limit $\epsilon$. In fact, $\delta$ and $\epsilon$ do not explicitly appear in the algorithms, and only influence it through the chosen value of $N$ or $M$. We now discuss how to choose $N$ and $M$ to obtain a test with error bound $\epsilon$.

\begin{algorithm}[H]
\renewcommand{\thealgorithm}{2}
\caption{Sequential hypothesis test}
\small
Input: BLTL property $\psi$, threshold probability $r$, observations $Y$, stopping condition $M$, number of burn-in steps $t_0$, prior $p_0$, proposal $q$.

Output: Choice of $H_0$ or $H_1$.
	\begin{algorithmic}[1]
		\State Sample initial parameter vector from the prior $\vartheta^0 \sim p_0(\theta)$
		\For{$i:=1 \ldots t_0$}
			\State $\vartheta^{i} :=$ getMCMCsample($\vartheta^{i-1}$)
		\EndFor
		
		\State Set $n:=1$, $S := 0$ and $\theta^0 := \vartheta^{t_0}$
		\Loop
			\State $\theta^{n} :=$ getMCMCsample($\theta^{n-1}$).
			\State Simulate the trajectory $\varsigma_{\theta^n}$
			\If{$\varsigma_{\theta^n} \models \psi$}
				\State $S := S + 1$
			\EndIf
			
			\If{$S \ge nr + M$}
				\State \textbf{return} $H_0$
			\ElsIf{$S \le nr - M$}
				\State \textbf{return} $H_1$
			\Else
				\State Set $n:=n+1$ and continue
			\EndIf
		\EndLoop

	\end{algorithmic}
\end{algorithm}

\subsection{Choosing the sample size}\label{sec:samplesize}
The statistical theory behind performing hypothesis tests on samples obtained from a Markov chain was developed in \cite{gyori2014hypothesis}. 
In \cite{gyori2014hypothesis}, concentration inequalities are used to bound the absolute difference between the empirical average $1/n \sum^{n}_{i=1}f(\theta^i)$, and the true (unknown) expected value $\E_{\pi}f$, for a function $f:\Theta\to\R$. The conditions for the inequalities to hold are that $\theta^i$ are states of a reversible Markov chain whose stationary distribution is $\pi$, further, it is required that $f$ is square integrable ($f\in L^2(\pi)$), and $0 \le f \le 1$.  
These conditions are satisfied in our setting with $f$ corresponding to the outcome of verification as
\begin{align}
f(\theta) = 
\begin{cases}
1 & \text{if } \varsigma_{\theta} \models \psi, \\
0 & \text{otherwise. }
\end{cases}
\end{align}

A key parameter appearing in the concentration inequalities is the \emph{spectral gap} of the Markov chain, which we denote $\gamma$. The spectral gap is a measure of the speed of mixing of the chain, which needs to be estimated in practice. Now we briefly review the iterative method for estimating the spectral gap, as introduced in the Appendix of \cite{gyori2014hypothesis}.
As before, we assume that $\theta=(\theta_1,\theta_2,\ldots,\theta_{d_{\theta}})\in \R^{d_{\theta}}$.
\begin{enumerate}
\item Run an initial simulation  of length $n$ yielding parameter values $\theta^1,\ldots, \theta^{n}$. In every step $1\le i\le n$, save each component $\theta^{i}_1,\ldots, \theta^{i}_{d_{\theta}}$.
\item Set $\eta=1$, and for each $1\le k\le d_{\theta}$, compute
\begin{equation}\label{eq:gammahat}
\hat{\gamma}_{\eta, k} := 1-(\hat{\rho}_{\eta,k}/\hat{V}_{k})^{1/\eta},
\end{equation}
where
\begin{align}
\hat{V}_k&:= \frac{1}{n} \sum_{i=1}^{n}\theta^i_{k} - \left(\frac{1}{n} \sum_{i=1}^n \theta^i_{k}\right)^2 \\
\hat{\rho}_{\eta, k}(f) &:= \frac{1}{n-\eta}\sum_{i=1}^{n-\eta}\left(\theta^i_{k}-
\frac{1}{n-\eta}\sum_{j=1}^{n-\eta}\theta^j_{k}
\right)\left(\theta^{i+\eta}_{k}-
\frac{1}{n-\eta}\sum_{j=1}^{n-\eta}\theta^{j+\eta}_{k}
\right).
\end{align}

Denote the minimum of $\hat{\gamma}_{\eta, 1},\ldots, \hat{\gamma}_{\eta, d_{\theta}}$ by $\hat{\gamma}_{\min}(1)$, and compute 
\begin{equation}\label{eq:etadef}\eta(1):=\frac{\log(n\hat{\gamma}_{\min}(1))}{4\log(1/(1-\hat{\gamma}_{\min}(1)))}.\end{equation}
\item Inductively assume we have already computed $\eta(j)$  for $j\ge 1$ (based on \eqref{eq:etadef}). Then compute $\gamma_{\min}(j+1)$ based on \eqref{eq:gammahat} using $\eta=\eta(j)$. If $\hat{\gamma}_{\min}(j+1)\ge \hat{\gamma}_{\min}(j)$, then stop, and let $\hat{\gamma}:=\hat{\gamma}_{\min}(j)$. Otherwise compute $\eta(j+1)$ and repeat this step.
\item To ensure a sufficient amount of initial data, if $n$ satisfies $n>100/\hat{\gamma}$, accept the estimate, otherwise choose $n=200/\hat{\gamma}$ and restart from Step 2.
\end{enumerate}

We now give results based on \cite{gyori2014hypothesis} to choose the needed sample size $N$ and stopping condition $M$ for the fixed sample size and sequential test, respectively, with which verification with error probability at most $\epsilon$ is achieved.

\begin{proposition}
The probability of choosing the incorrect hypothesis in Algorithm 1 is at most $\epsilon$ with the choice of 
\begin{equation}
N \ge \frac{\log(1/\epsilon)}{\gamma\delta^2}.
\end{equation}
\end{proposition}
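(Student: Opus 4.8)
The plan is to reduce the statement to a concentration inequality for the empirical mean $\bar f_N := \frac1N\sum_{n=1}^N f(\theta^n)$ of the reversible, stationary Markov chain $(\theta^n)$ with spectral gap $\gamma$, where $f$ is the $\{0,1\}$-valued verification indicator defined in Section~\ref{sec:samplesize}, so that $\E_\pi f = P_\psi$. Note that Algorithm~1 returns $H_0$ precisely when $S = N\bar f_N \ge Nr$, i.e.\ when $\bar f_N \ge r$, and $H_1$ otherwise. So an error occurs only in one of two ways: either $H_0$ is true, meaning $P_\psi \ge r+\delta$, but $\bar f_N < r$, which forces $|\bar f_N - \E_\pi f| > \delta$; or $H_1$ is true, meaning $P_\psi \le r-\delta$, but $\bar f_N \ge r$, which again forces $|\bar f_N - \E_\pi f| \ge \delta$. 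In both cases the error event is contained in $\{|\bar f_N - \E_\pi f| \ge \delta\}$, so it suffices to show this deviation has probability at most $\epsilon$ under the stated choice of $N$.

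The core step is then to invoke the concentration inequality from \cite{gyori2014hypothesis} for sums of functions along a reversible stationary Markov chain. For a bounded function $0 \le f \le 1$, such inequalities give a Hoeffding-type bound of the form $\PP\big(|\bar f_N - \E_\pi f| \ge \delta\big) \le 2\exp\big(-c\, N \gamma \delta^2\big)$ for an absolute constant $c$ (the spectral-gap-adjusted effective sample size being $N\gamma$ rather than $N$). I would cite the precise form of this inequality as stated in \cite{gyori2014hypothesis}; the version with constant $c$ such that the right-hand side is $2\exp(-N\gamma\delta^2/2)$ or similar is standard. Setting this bound to be at most $\epsilon$ and solving for $N$ yields $N \ge \frac{\kappa \log(1/\epsilon)}{\gamma \delta^2}$ for the appropriate constant $\kappa$; the claimed clean form $N \ge \frac{\log(1/\epsilon)}{\gamma\delta^2}$ corresponds to using the sharpest available constant (and possibly absorbing the factor $2$ and using a one-sided bound, since in each error scenario only one tail is relevant).

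The main obstacle, and the place where care is needed, is matching the constant: a two-sided Hoeffding-type bound $2\exp(-N\gamma\delta^2/2)$ would give $N \ge \frac{2\log(2/\epsilon)}{\gamma\delta^2}$, which is not literally $\frac{\log(1/\epsilon)}{\gamma\delta^2}$. To get the stated bound one should exploit that each of the two error events is one-sided — when $H_0$ holds we only need an upper-tail bound on $\E_\pi f - \bar f_N$, and when $H_1$ holds only a lower-tail bound — and use the one-sided Bernstein-type inequality from \cite{gyori2014hypothesis}, together with the bound $\mathrm{Var}_\pi(f) \le \E_\pi f(1-\E_\pi f)$, which under $H_0$ is small when $P_\psi$ is near $1$; in the worst-case regime the Hoeffding form suffices with the right constant. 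I would therefore: (i) identify whether $H_0$ or $H_1$ is the true hypothesis and pick the corresponding one-sided concentration inequality from \cite{gyori2014hypothesis}; (ii) bound the error probability by $\exp(-N\gamma\delta^2)$ (or the sharper Bernstein expression) using $0\le f\le 1$ and the separation $\delta$ between the threshold $r$ and $P_\psi$; (iii) impose $\exp(-N\gamma\delta^2) \le \epsilon$ and rearrange to obtain $N \ge \frac{\log(1/\epsilon)}{\gamma\delta^2}$. The only genuinely delicate point is ensuring the inequality invoked has exactly the constant claimed; everything else is the routine event-inclusion argument sketched above.
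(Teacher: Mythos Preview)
Your approach is correct and in spirit the same as the paper's: both reduce the error event to a one-sided deviation of the empirical mean of the $\{0,1\}$-valued verification indicator along the reversible chain, and then apply the spectral-gap concentration result from \cite{gyori2014hypothesis}. The only difference is packaging. The paper's proof is two lines: it directly invokes Proposition~3.2 of \cite{gyori2014hypothesis}, which already states that the error probability of the fixed-sample test is bounded by $\exp(-\gamma\delta^2 N)$, and then rearranges. In other words, the event-inclusion argument you spell out (error $\Rightarrow |\bar f_N - P_\psi|\ge\delta$, then one-sided tail) and the constant-matching you worry about are already absorbed into that cited proposition, so there is nothing to reconstruct here. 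Your discussion of one-sided versus two-sided bounds and Bernstein versus Hoeffding constants is sound but unnecessary once you cite that result.
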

\begin{proof}
Proposition 3.2 of \cite{gyori2014hypothesis} proves that the probability of choosing the incorrect hypothesis in a fixed sample size test is bounded by 
\begin{equation}\label{eq:nbound}
\exp(-\gamma\delta^2n).
\end{equation}
The proposition follows by rearrangement. 
\end{proof}

\begin{proposition}
The probability of choosing the incorrect hypothesis in Algorithm 2 is at most $\epsilon$ with the choice of
\begin{equation}\label{eq:seqtestM}
M = \frac{\log(2/(\epsilon\gamma\delta^2))}{2\gamma\delta+\gamma\delta^2/(1-r)}.
\end{equation}
\end{proposition}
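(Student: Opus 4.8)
The plan is to reduce the analysis of Algorithm 2 to a statement about the probability that a suitable random walk associated with the MCMC samples ever crosses a fixed barrier, and then to import the corresponding sequential-testing bound from \cite{gyori2014hypothesis} (the sequential analogue of Proposition 3.2, which underlies the fixed-sample bound \eqref{eq:nbound} used in the previous proposition). Concretely, write $f$ for the $\{0,1\}$-valued verification function as in Section \ref{sec:samplesize}, let $S_n=\sum_{i=1}^n f(\theta^i)$, and set $Z_n = S_n - n P_\psi$, so that $\E_\pi f = P_\psi$ makes $Z_n$ a centered partial sum over the reversible chain. The algorithm returns $H_0$ the first time $S_n \ge nr + M$ and $H_1$ the first time $S_n \le nr - M$; under $H_1$ (say $P_\psi \le r-\delta$) the event of an incorrect decision is contained in $\{\exists n:\ S_n \ge nr + M\} \subseteq \{\exists n:\ Z_n \ge n\delta + M\}$, and symmetrically under $H_0$. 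So it suffices to bound $\PP\big(\exists n \ge 1:\ Z_n \ge n\delta + M\big)$ by $\epsilon$ for the stated $M$.

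The key steps, in order, are: (1) fix the harder of the two hypotheses, say $H_1$, and translate ``incorrect decision'' into the one-sided boundary-crossing event above; (2) apply the maximal/sequential Hoeffding-type concentration inequality for reversible Markov chains from \cite{gyori2014hypothesis} — the same machinery that produces the exponent $\gamma\delta^2 n$ in \eqref{eq:nbound} — which controls $\PP(Z_n \ge n\delta + M)$ for each $n$ by something like $\exp(-\gamma(\delta n + M)^2/(2\cdot\text{variance proxy}\cdot n + \dots))$, with the variance proxy bounded using $f\in\{0,1\}$ and $\E_\pi f \le r$, giving a term of order $r(1-r)$ or $(1-r)$ as in the $\gamma\delta^2/(1-r)$ piece of the denominator of \eqref{eq:seqtestM}; (3) handle the union over all $n\ge1$ — either by a union bound after noting the per-$n$ tail is summable, or (cleaner) by a peeling/time-blocking argument that groups $n$ into geometric blocks and uses the maximal inequality on each block, so that the sum of block-level failure probabilities telescopes into a geometric series; (4) sum the resulting series and solve for the $M$ that makes the total at most $\epsilon$, which produces the logarithmic numerator $\log(2/(\epsilon\gamma\delta^2))$ (the $\gamma\delta^2$ inside the log being exactly the per-step exponent that governs the geometric decay rate of the block probabilities, and the factor $2$ absorbing the two-sided union over $H_0$ and $H_1$); (5) note the denominator $2\gamma\delta + \gamma\delta^2/(1-r)$ arises from linearizing the quadratic crossing exponent $\gamma(\delta n + M)^2/(\cdots)$ in the regime where the linear-in-$n$ drift term dominates. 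Then invoke \cite{gyori2014hypothesis} for the precise form of these constants and conclude.

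I expect the main obstacle to be step (3)–(4): getting the union over the unbounded time horizon $n$ to collapse to a clean closed-form $M$ with exactly the constants in \eqref{eq:seqtestM}. A naive union bound over all $n$ of a per-$n$ bound $\exp(-c(\delta n+M))$ does sum to a geometric series $\sim \exp(-cM)/(1-\exp(-c\delta))$, and setting this equal to $\epsilon$ gives $M \approx \log(1/(\epsilon(1-e^{-c\delta})))/c \approx \log(1/(\epsilon c\delta))/c$, which matches the shape of \eqref{eq:seqtestM} with $c = 2\gamma\delta + \gamma\delta^2/(1-r)$ up to the factor $2$ — so the real work is verifying that the appropriate one-sided Markov-chain Hoeffding bound from \cite{gyori2014hypothesis} indeed yields a per-$n$ exponent that is (at least) linear in $n$ with rate $c$, rather than merely the quadratic-in-the-level form, and that the $r$-dependence enters precisely as $1/(1-r)$. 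Because the excerpt already grants us the results of \cite{gyori2014hypothesis}, in the write-up I would state the relevant sequential concentration inequality from that reference, plug in $f$ and $\E_\pi f \le r-\delta$ (under $H_1$) or the symmetric bound under $H_0$, perform the geometric summation, and solve for $M$; the routine parts are the algebra of that summation and the symmetric treatment of the two hypotheses.
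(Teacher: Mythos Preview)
Your plan is sound but substantially more elaborate than what the paper does. The paper's proof is a two-line citation: it invokes Proposition~3.3 of \cite{gyori2014hypothesis}, which already states that the error probability of the sequential test is bounded by $\exp(-2\gamma\delta M)\cdot\exp(-M\gamma\delta^2/(1-r))$, and then solves for $M$. In other words, the boundary-crossing reduction, the per-$n$ concentration bound, and the union over the time horizon that you outline in steps (1)--(4) are all absorbed into that cited proposition; the paper does none of this work itself. What you propose is essentially a re-derivation of Proposition~3.3 from the more primitive Markov-chain Hoeffding/Bernstein inequalities in the same reference, which is correct in spirit and would succeed, but is unnecessary given that the sequential result is available off the shelf.

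Two small points. First, your attribution of the factor $2$ in the numerator to a ``two-sided union over $H_0$ and $H_1$'' is not right: for any fixed true $P_\psi$, only one boundary crossing constitutes an error, so no factor of $2$ arises that way. Second, note that a literal rearrangement of the displayed bound $\exp(-M(2\gamma\delta+\gamma\delta^2/(1-r)))\le\epsilon$ yields $M\ge\log(1/\epsilon)$ over the same denominator, not $\log(2/(\epsilon\gamma\delta^2))$; the extra $2/(\gamma\delta^2)$ inside the logarithm is exactly the kind of prefactor your geometric-sum argument in step (4) produces, suggesting the displayed bound in the paper's proof is a simplified form of what Proposition~3.3 actually states. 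So your more detailed route does buy some insight into where that constant comes from, even if the paper is content to black-box it.
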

\begin{proof}
Proposition 3.3 of \cite{gyori2014hypothesis} proves that the probability of choosing the incorrect hypothesis in a sequential test is bounded by 
\begin{equation}\label{eq:seqtesterror}
\exp(-2\gamma\delta M)\cdot \exp(-M\gamma\delta^2/(1-r)).
\end{equation}
The proposition follows by rearrangement. 
\end{proof}

\subsection{Decoupling sampling and model checking}\label{sec:decouple}
Typically, one will be interested in verifying several different properties of a model. It is impractical to re-run the full MCMC procedure for each property independently. We can exploit the fact that the Markov chain based sample collection is independent from the model checking task. The sequence of parameter samples collected by the Markov chain only depends on the model and the experimental data, and not on the property that is being verified. 
In practice, it is better to first run the Markov chain for a large number of steps ``off-line", and store the collected parameter samples for later use in verification. 

Assuming that a sufficiently long sequence of parameter samples has been stored, it is possible to run the fix sample size or sequential hypothesis tests on this stored set of samples. In fact, there are two important optimizations that this enables in practice. 

First, many of the parameters that the Markov chain generates are identical. This is because each time a proposed parameter is rejected, the previous parameter is kept (the Markov chain stays in its original state). Depending on the design, the Markov chain will typically have an acceptance rate between $10-40\%$. Naturally, it is enough to perform verification with each distinct parameter, and take into account the multiplicity of the parameter in the hypothesis test. 

Second, it is possible to parallelize the decoupled verification phase. For the fix sample size test, massive parallelization is possible, since each stored parameter can be verified independently. For the sequential test, it is possible to introduce \emph{batches} of samples that are verified in parallel. After verifying a batch of samples, the stopping condition of the sequential test is checked, and the procedure either stops and makes a decision, or another batch of samples is simulated and verified. 

\section{Results}\label{sec:results}
We implemented the proposed method in C++. Here we present a case study from the domain of systems biology, where dynamical system models (and in particular ODE models) are commonly used to understand the temporal behavior of biochemical components inside cells \cite{klipp2005systems}. We apply our method on a model of the JAK-STAT biochemical pathway. The signaling cascade is initiated by erythropoietin (Epo), which, when bound to a receptor, induces the phosphorylation of STAT protein in the cytoplasm. Phosphoylated STAT dimerizes and enters the nucleus where it alters gene expression. Subsequently the nuclear STAT goes through dissociation and dephosphorylation and is transported back into the cytoplasm (see also \cite{vanlier2012integrated}). The set of ODE equations describing the dynamics are given in the Appendix.

The variables in the model and the 4 model parameters (whose values are not known) cannot be directly measured. However, experimental data for two indirect quantities (total phosphorylated STAT, and total STAT in cytoplasm) has been published in \cite{swameye2003}. We use Gaussian likelihood (see \eqref{eq:guasslh}) when comparing the data to simulated trajectories, and 
assume a uniform prior distribution over a range of possible parameter values. 
The parameter vector of the model is $\theta=(k_1,k_2,k_3,k_4)$. The parameter ranges and the covariance matrix diagonal entries ($\sigma_{\mathrm{MH}}$) used to define the MCMC proposal distribution are provided in Table \ref{table:params}.

\begin{table}
\centering
	\begin{tabular}{c|c|c}
	Parameter 	& Limits 	& $\sigma_{\mathrm{MH}}$ \\ \hline 
	$k_1$ 		& $[0,5]$			& $0.02$ \\
	$k_2$		& $[0,30]$			& $0.5$ \\
	$k_3$		& $[0,1]$			& $0.01$ \\
	$k_4$		& $[0,5]$			& $0.02$
	\end{tabular}
\caption{Parameter ranges and entries in the proposal covariance matrix}
\label{table:params}
\end{table}

We use a Markov chain as introduced in Section \ref{sec:smcmcmc} to collect samples from the space of parameters according to the posterior distribution, while evaluating the corresponding trajectories against properties of interest. 
Figure \ref{fig:density} shows the set of parameters collected by one Markov chain. The high-probability region of the parameter posterior has a complex shape with some parameters being well constrained while others showing large uncertainty. 

\begin{figure}[h]
\includegraphics[width=\textwidth]{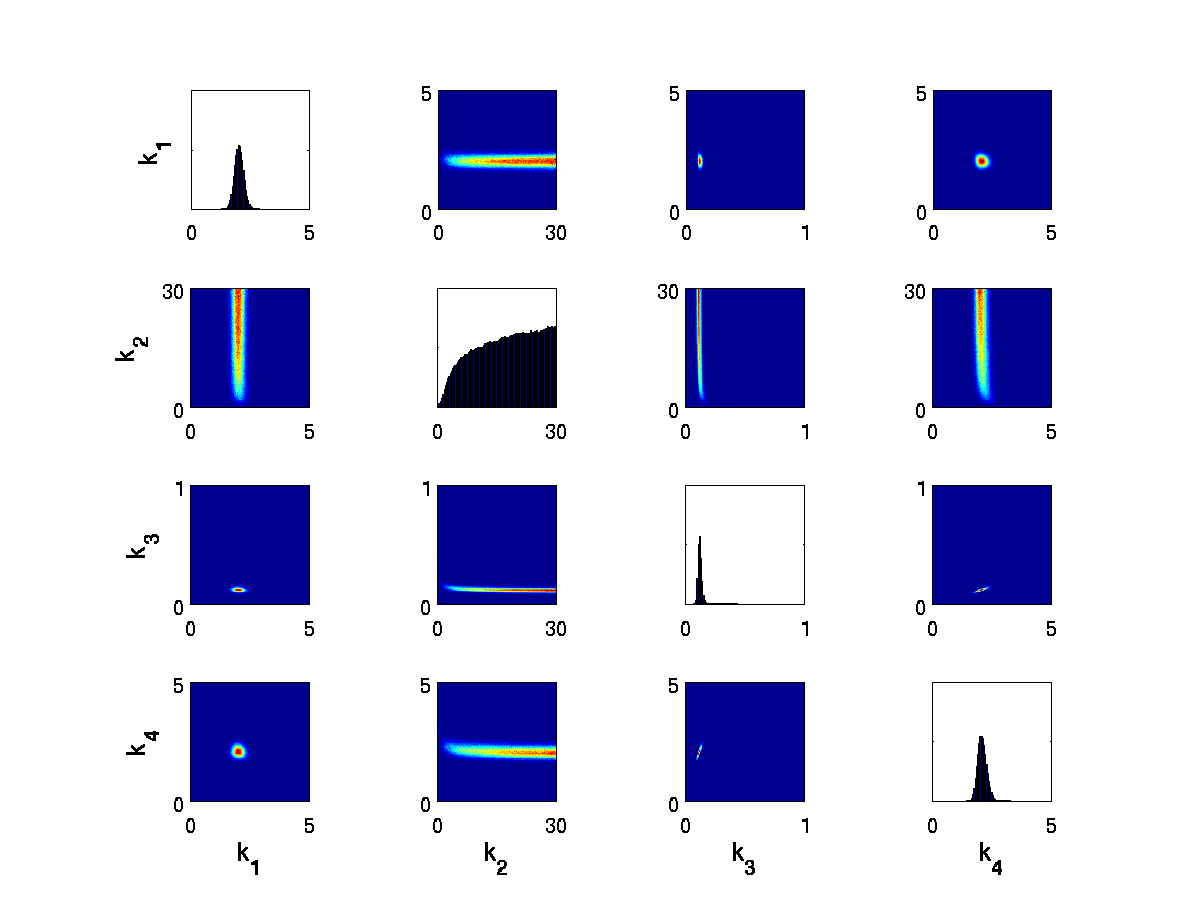}
\caption{Density of samples collected using the Markov chain Monte Carlo approach in the space of model parameters $\theta=(k_1,k_2,k_3,k_4)$. Histograms and 2-dimensional projections of samples are shown. Red color indicates high sample density, and blue indicates low sample density.}
\label{fig:density}
\end{figure}

In performing verification, we are mainly interested in the dynamics of nuclear STAT (STATn), since it is involved in gene expression \cite{swameye2003}. Specifically, we verify dynamical properties of STATn under various types of Epo stimulation (Epo is an input set externally and does not appear in the formulas). We chose $16$ discrete time points between $0$ and $60$ minutes to represent trajectories with respect to BLTL formulas (in the formulas below we will use the absolute time rather than the discrete time index).

\noindent \textbf{Property 1} \emph{$\mathrm{STATn}$ reaches a \emph{high} level (it reaches $1$ but does not cross $1.2$), and then settles at a \emph{low} level under transient Epo stimulation }
\begin{align}
\psi_{1} = &G^{\le 60}[0 \le \mathrm{STATn} \le 1.2] \wedge F^{\le 60}([1 \le \mathrm{STATn} \le 1.2] \nonumber \\
&\wedge F^{\le 60}(G^{\le 60}([0 \le \mathrm{STATn} \le 0.5]))).
\end{align}

\subsection{Method validation}

We use $\PP_{\ge r}(\psi_{1})$ as a case study for validating different aspects of our approach. We ran $m=1000$ independent instances of the MCMC sampler for a total of $2\cdot 10^6$ steps each (with $t_0=5\cdot 10^4$ burn-in steps). To get a reliable estimate of the true underlying probability of satisfaction $P_{\psi_1}$, we took the overall average of the estimates from all $m$ chains, and treated the obtained value $P_{\psi_1} \approx \widehat{P}_{\psi_1}=0.8123$ as the reference for $P_{\psi_1}$. 

Figure \ref{fig:scatter} shows the parameter samples collected by one of the Markov chains according to the satisfaction of $\psi_1$. The projection of samples to the joint space of model parameters $k_1,k_2$ and $k_3$ show the separation between the set of parameters with which $\psi_1$ is satisfied, and ones with which it is not. 
\begin{figure}[h]
\includegraphics[width=\textwidth]{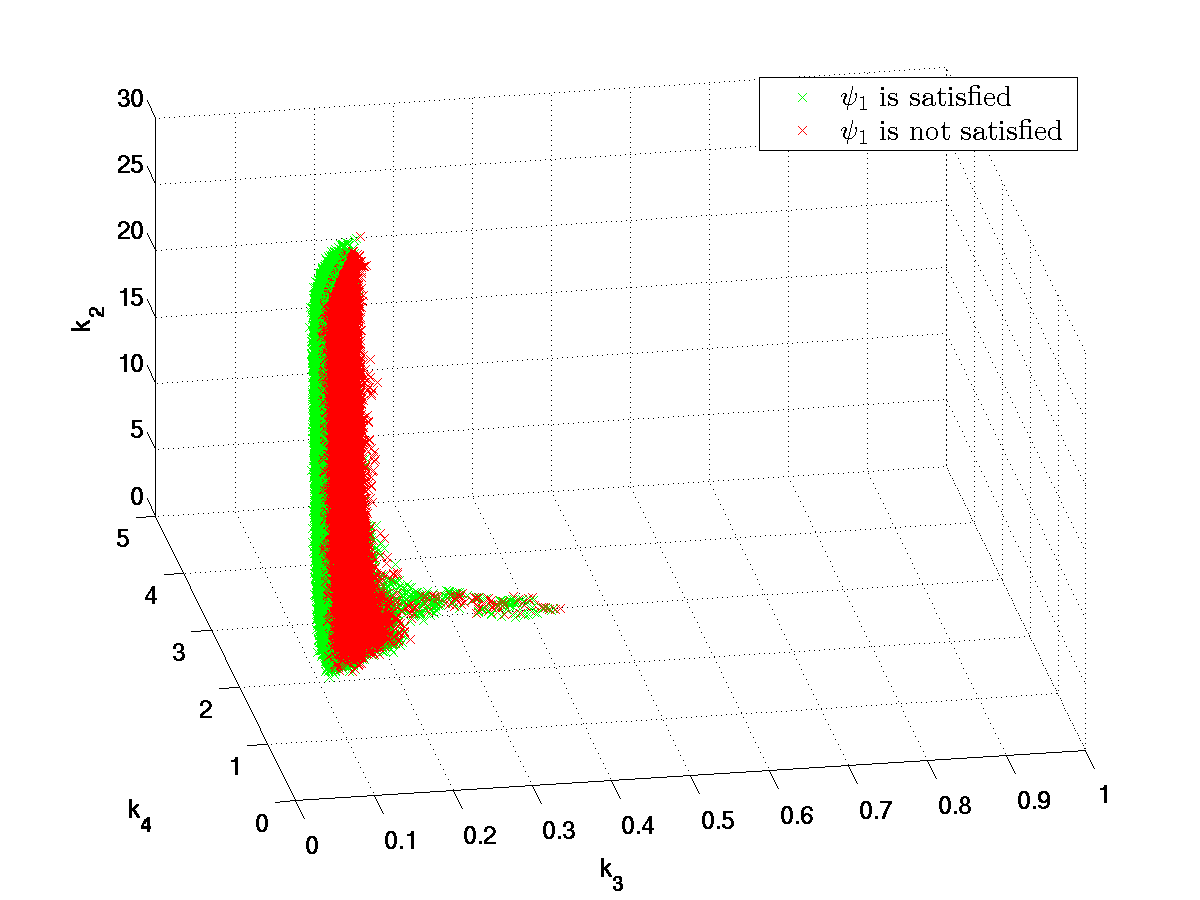}
\caption{Samples collected using the Markov chain Monte Carlo approach in the space of model parameters $\theta=(k_1,k_2,k_3,k_4)$, projected to the space of $(k_2,k_3,k_4)$, with colors indicating the satisfaction of property $\psi_1$ under each parameter combination. }
\label{fig:scatter}
\end{figure}

We use the output of the $m$ independent chains as a basis for constructing results in Figure \ref{fig:results_psi1}(a-d). We used the method described in Section \ref{sec:samplesize} to estimate the value of the spectral gap, for each chain, independently. 

We first examined the empirical error rate of the fixed sample size hypothesis test. We define the empirical error rate $E_n$ as the ratio of chains choosing $H_0$ if $H_1$ holds (or the ratio choosing $H_1$ if $H_0$ holds) after $n$ steps. If neither $H_0$ nor $H_1$ holds (when $r-\delta < P_{\psi} < r + \delta$), then $E_n := 0$.
We set $r=\widehat{P}_{\psi_1}-\delta$ and calculated $E_n$ for a range of sample sizes up to $n=10^6$. For the same set of sample sizes, we calculated the mean error rate bound derived from equation \eqref{eq:nbound} as $\epsilon_n = \exp(-n\gamma\delta)$ (here the mean is used since $\gamma$ is estimated for each chain independently). Figure \ref{fig:results_psi1}(a) shows $E_n$ and $\epsilon_n$ as a function of $n$ for different values of $\delta$. It is apparent that $E_n$ decreases monotonically with increasing sample size $n$, and that $E_n$ is higher for lower values of $\delta$. As seen in Figure \ref{fig:results_psi1}(a), the empirical error rates are consistently below the upper bound ($E_n \le \epsilon_n$ for all examined $n,\delta$).

We next look at results for sequential hypothesis testing. We refer to the number of samples collected in the Markov chain before a decision is made as the \emph{stopping time}.
In Figure \ref{fig:results_psi1} (b), the empirical cumulative distribution of stopping times is shown for the hypothesis test on $P_{\psi_1}$ for a set of $r$ values in $(0,1)$. Here the value of $\delta=0.05$ and $\epsilon=0.01$ is fixed. The distribution of fixed sample sizes for the same hypothesis test is also show as a reference.
The plot shows that for values of $r$ distant from the true probability, sequential sampling consistently terminates with small variability at low sample sizes. When $r$ is close to the true probability, the stopping times show higher variability. 
Figures \ref{fig:results_psi1}(c-d) show the mean empirical stopping times for a range of $r$ values for different values of $\delta$ (c), and different values of $\epsilon$ (d). For values of $r$ close to $\widehat{P}_{\psi_1}$, some chains did not stop within $2\cdot 10^6$ samples, and the corresponding mean values are therefore not determined. These empirical results are consistent with sequential hypothesis testing in the independent sample setting \cite{younes2002probabilistic}.

Finally, we evaluated the empirical error rate in the sequential hypothesis test, and found that out of the $m=1000$ independent runs, no error was made under all examined choices of $r,\epsilon,\delta$. This shows that the specified error bound \eqref{eq:seqtesterror} was indeed met. This also suggests that the bound \eqref{eq:seqtesterror} might not be sharp and $M$ could be chosen even smaller than described by \eqref{eq:seqtestM}, resulting in earlier stopping. 

\begin{figure}[h]
\centering
	\begin{tabular}{ccr}
	\addtolength{\subfigcapskip}{0.2cm}
	
	\subfigure[Empirical error rates for the fixed sample size test for a range of sample sizes. Dashed lines show the theoretical upper bounds derived from \eqref{eq:nbound}. Here $r = \widehat{P}_{\psi_1} - \delta$ and $\epsilon=0.01$ are fixed, and $3$ distinct $\delta$ values are shown.]{
		\includegraphics[height=4.3cm]{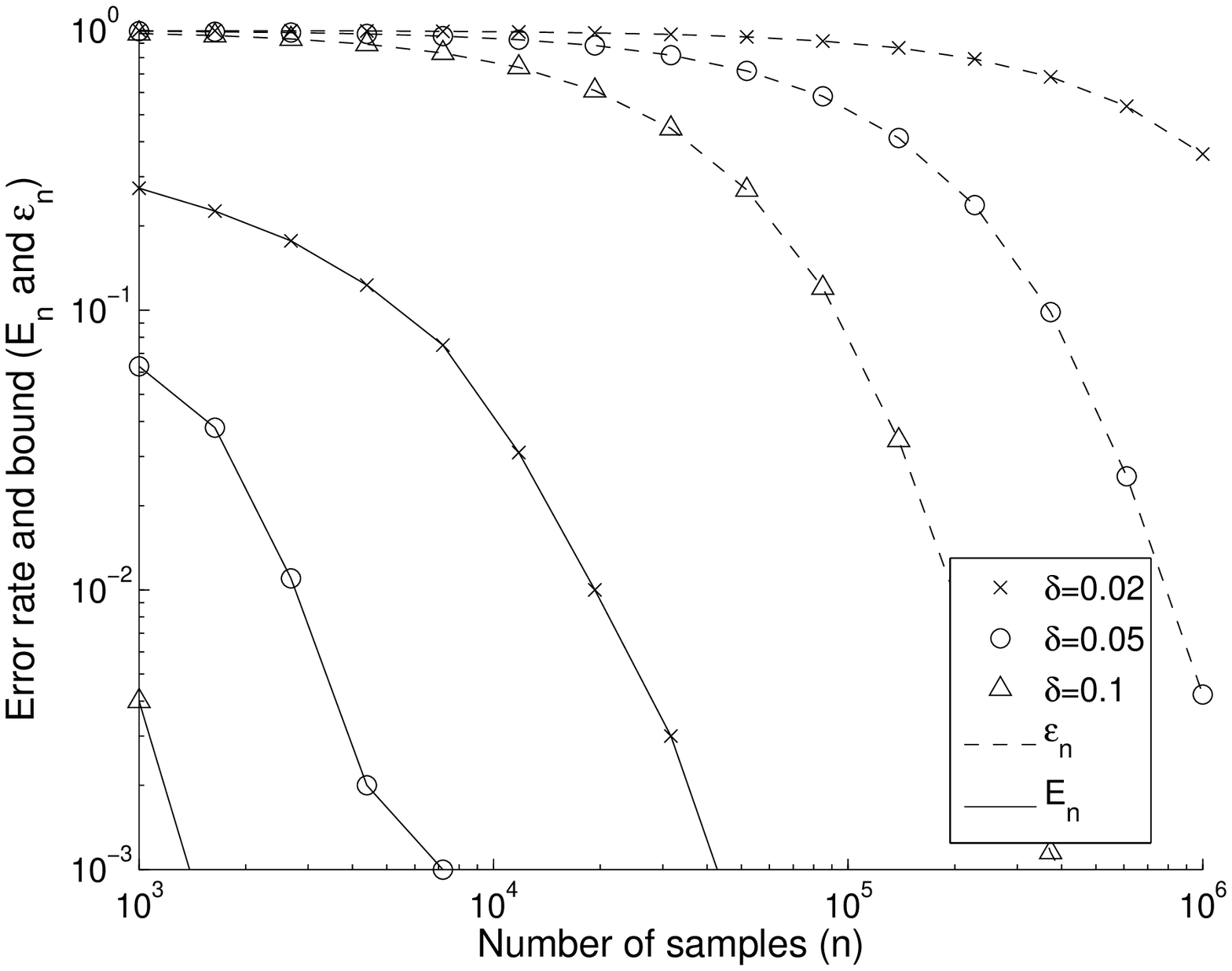}
		}
	&
	\addtolength{\subfigcapskip}{0.2cm}
	\subfigure[Empirical distribution of stopping times with sequential hypothesis test for different values of $r$. Here $\delta=0.05$ and $\epsilon=0.01$ is used.]{
		\includegraphics[height=4.3cm]{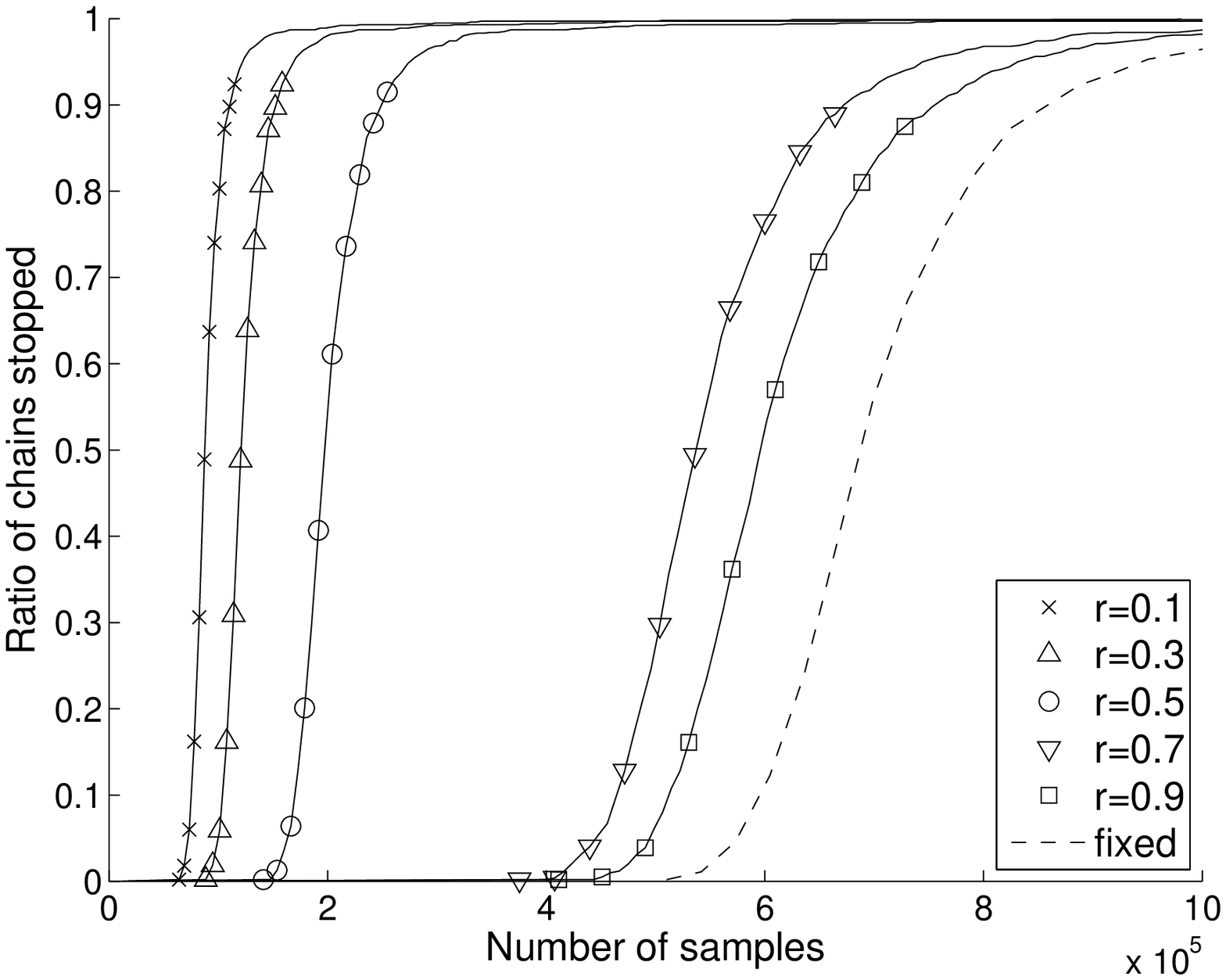}
		}
	&\\
	\addtolength{\subfigcapskip}{0.2cm}
	\subfigure[Mean empirical stopping times for sequential hypothesis test for different values of $\delta$, with $\epsilon=0.01$. Dashed lines show mean sample sizes required for the fixed sample size test.]{
		\includegraphics[height=4.3cm]{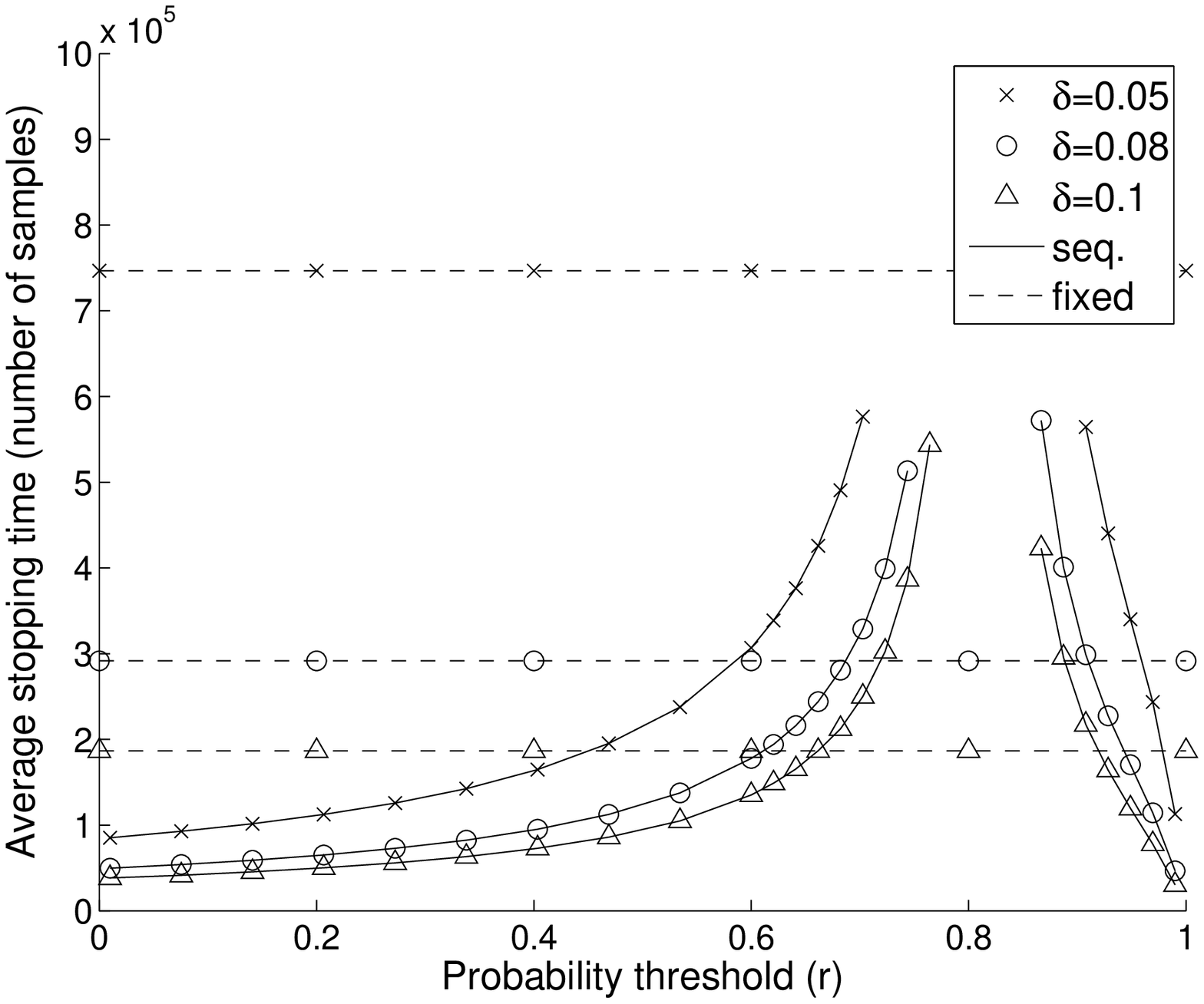}
		}
	&
	\addtolength{\subfigcapskip}{0.2cm}
	\subfigure[Mean empirical stopping times for sequential hypothesis test for different values of $\epsilon$, with $\delta=0.05$. Dashed lines show mean sample sizes required for the fixed sample size test.]{
		\includegraphics[height=4.3cm]{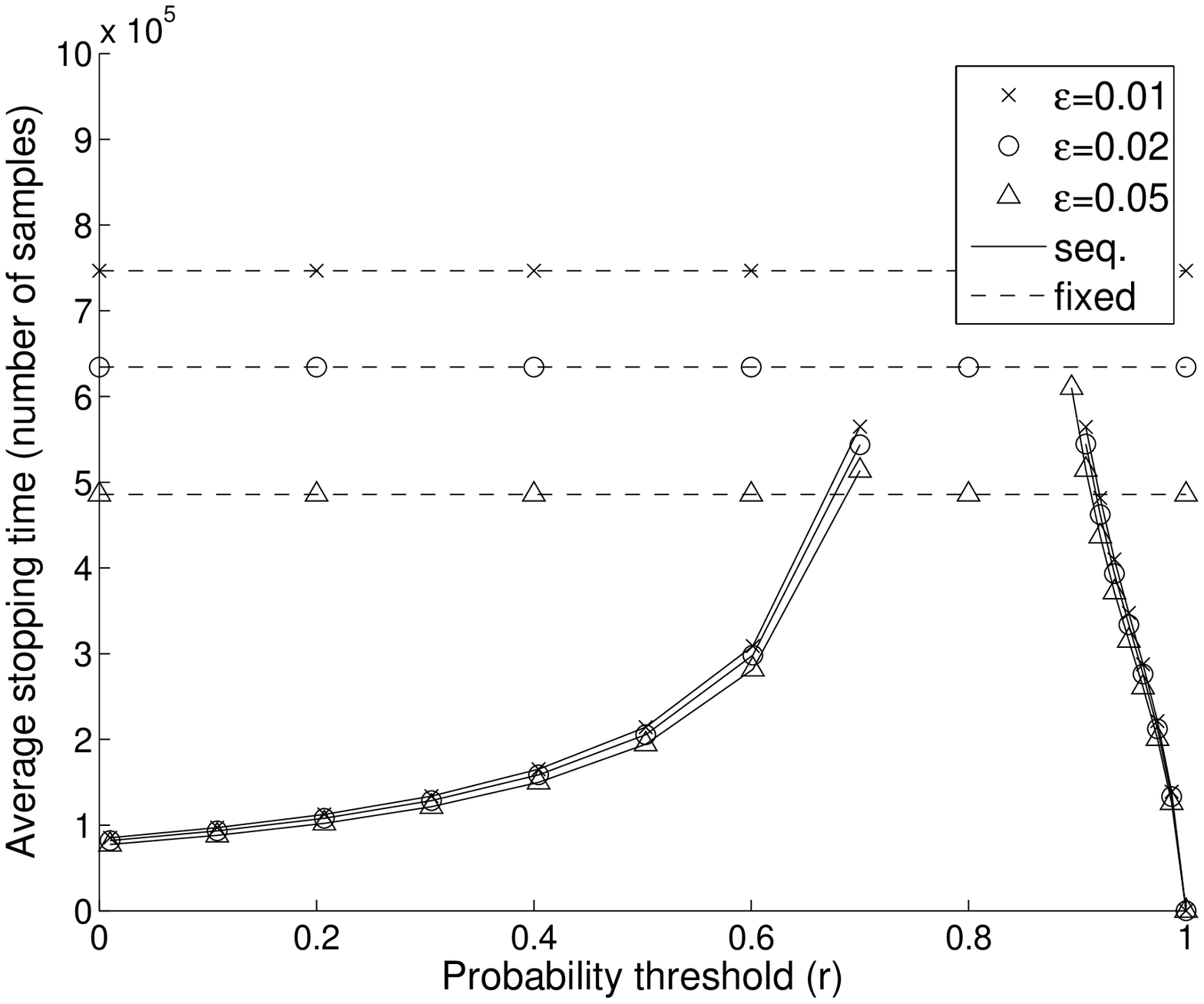}
		}
		
	\end{tabular}
	\caption{Results for the verification of $\psi_1$. }
	\label{fig:results_psi1}
\end{figure}

\subsection{Further properties}

We now look at two further properties regarding $\mathrm{STATn}$. Recall that property $\psi_1$ specified the behavior of $\mathrm{STATn}$ under transient Epo stimulation. Here we specify the behavior of STATn under two rounds of transient Epo stimulation ($\psi_2$) and under sustained Epo stimulation ($\psi_3$), (again, Epo is set externally and thus does not appear in the formulas). 

\noindent \textbf{Property 2} \emph{$\mathrm{STATn}$ reaches a \emph{high} level and then settles at a \emph{medium} level under two rounds of transient Epo stimulation }
\begin{equation}
\psi_{2} = F^{\le 60}([1 \le \mathrm{STATn} \le 2] \wedge F^{\le 60}(G^{\le 60}([0.5 \le \mathrm{STATn} \le 1]))).
\end{equation}

\noindent \textbf{Property 3} \emph{$\mathrm{STATn}$ reaches a \emph{very high} level and then settles at a \emph{very high} level under sustained Epo stimulation }
\begin{equation}
\psi_{3} = F^{\le 60}(G^{\le 60}([1.5 \le \mathrm{STATn} \le 2])).
\end{equation}

Table \ref{tab:results} summarizes the results of the verification with properties $\psi_1$ to $\psi_3$. (Run times were measured on a $2.83$ GHz computer with 8GB of RAM).

\begin{table}[h]
\small
\begin{center}
\begin{tabular}{|c|c|c|c|c|c|c|}
\hline
Property & $r$ & $\delta$ & $\epsilon$ & Outcome & Samples/time taken~(seq.) & Samples/time taken~(fixed)\\ \hline
$\PP_{\ge r}(\psi_1)$ & $0.7$ & $0.05$ & $0.01$ & True & $5.65\cdot 10^5$/1128s & $7.46\cdot 10^5$/1492s \\
$\PP_{\ge r}(\psi_2)$ & $0.8$ & $0.05$ & $0.01$ & True & $3.12\cdot 10^5$/618s & $7.46\cdot 10^5$/1477s\\
$\PP_{\ge r}(\psi_3)$ & $0.8$ & $0.05$ & $0.01$ & False & $7.80\cdot 10^4$/154s & $7.46\cdot 10^5$/1504s\\
\hline
\end{tabular}
\end{center}
\caption{Verification results on properties of the JAK-STAT pathway model. All numbers shown are mean values across $1000$ independent runs.}
\label{tab:results}
\end{table}

\section{Conclusion}
In this paper we proposed a method for performing probabilistic verification of a system conditioned on noisy observations, using dependent realizations of it's dynamics.
There are several directions along which results presented here can be generalized. Here we considered dynamical systems described as systems of ODEs. It is possible to generalize the methodology to continuous-time Markov chain (CTMC) and stochastic differential equation (SDE) models \cite{kwiatkowska2007stochastic,wilkinson2012stochastic}. In these models the system state over time is described by a stochastic process. By conditioning on observations, one can consider the posterior distribution of the system state (and any unobserved parameters), and verify the system's behavior with respect to this distribution. MCMC methods have been proposed for sampling the posterior in such models \cite{golightly2011bayesian}, and our probabilistic verification methods could be adapted to this context. 

In our case study, we verified properties on a model that has $4$ unknown parameters. The use of MCMC methods for sampling posterior distributions on considerably larger biochemical pathway models has been demonstrated under realistic conditions (see for instance \cite{xu2010inferring,eydgahi2013properties}). This suggests that our proposed verification procedure will also be applicable to larger problems. It may also be interesting to examine the use of methods other than MCMC, such as sequential Monte Carlo \cite{kantas2009overview} or approximate Bayesian computation \cite{toni2009approximate}, however, rigorous bounds on the required sample size in these settings is still an open question. 

In this work we posed probabilistic verification as a hypothesis testing problem. In a Bayesian model checking approach \cite{clarkebayes,clarkebayes2}, it is assumed that the probability of the satisfaction of a property is a random variable. One advantage of the Bayesian approach is that if useful priors are provided, the verification can be accomplished with significantly reduced sample size. It is conceptually straightforward to adapt our method and the sample size bounds to a Bayesian model checking setting and it is a possible future direction to pursue.

\bibliographystyle{splncs}
\bibliography{References}

\clearpage
\section{Appendix}\label{sec:apppathway}
Here we provide additional details on the JAK-STAT pathway model case study. The ODE equations governing the model are shown in Figure \ref{fig:jakstat}. The species in the model are as follows:

\vspace{10pt}
\begin{tabular}{|l|l|l|}
 \hline
 Name & Description & Init. amount \\ \hline
Epo & Erythropoietin, input stimulus & 2.0 \\ \hline
STAT & Unphosphorylates STAT monomer in cytoplasm & 0\\ \hline
STATp & Phosphorylated STAT monomer in cytoplasm & 0\\ \hline
STATpd & Phosphorylated STAT dimer in cytoplasm & 0\\ \hline
STATn & Total STAT in nucleus & 0\\ \hline
$X_1 \ldots X_K$ & Represent delay in STAT exiting nucleus (we use $K=10$) & 0\\ \hline
\end{tabular}

\begin{figure}
	\centering
	\begin{align*}
		\frac{d[\mathrm{STAT}]}{dt} & = -k_1[\mathrm{STAT}][\mathrm{Epo}] + 2k_4[X_K]\\
		\frac{d[\mathrm{STATp}]}{dt} & = k_1[\mathrm{STAT}][\mathrm{Epo}] - k_2[\mathrm{STATp}]^2\\
		\frac{d[\mathrm{STATpd}]}{dt} & = -k_3[\mathrm{STATpd}] + 0.5k_2[\mathrm{STATp}]^2\\
		\frac{d[\mathrm{X_1}]}{dt} & = k_3[\mathrm{STATpd}] -k_4[\mathrm{X}_1]\\
		\frac{d[\mathrm{X_j}]}{dt} & = k_4[\mathrm{X}_{i-1}]-k_4[\mathrm{X}_i] \quad , \quad \quad j=2\ldots K \\
		\frac{d[\mathrm{STATn}]}{dt} & = k_3[\mathrm{STATpd}]-k_4[\mathrm{X}_K]
	\end{align*}
\caption{ODE model of the JAK-STAT pathway under Epo stimulation.}
\label{fig:jakstat}
\end{figure}

Figure \ref{fig:expdata} shows the experimental data $Y$ used to define the posterior distribution. The data points, as well as standard deviations are obtained from experiments published in~\cite{swameye2003}. Figure \ref{fig:epoinput} shows $3$ different time courses for the externally set Epo stimulation used when verifying with respect to $\psi_1$, $\psi_2$ and $\psi_3$ respectively. 

\begin{figure}[H]
\centering
\includegraphics[width=10cm]{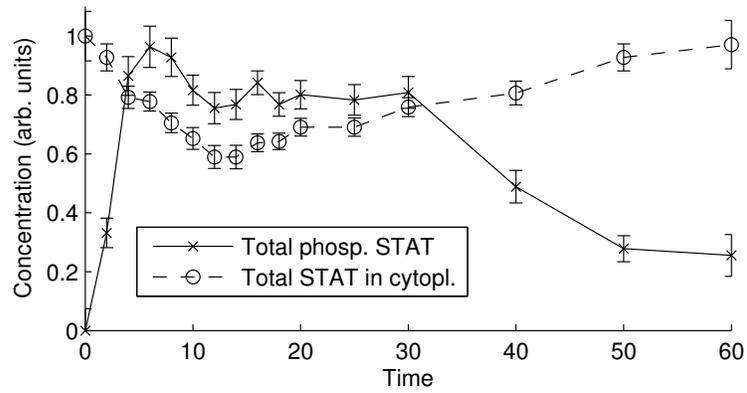}
\caption{Experimental data used for case study~\cite{swameye2003}. The Gaussian likelihood of parameters is evaluated using the shown data points and deviations.}
\label{fig:expdata}
\end{figure}

\begin{figure}[H]
\centering
\includegraphics[width=10cm]{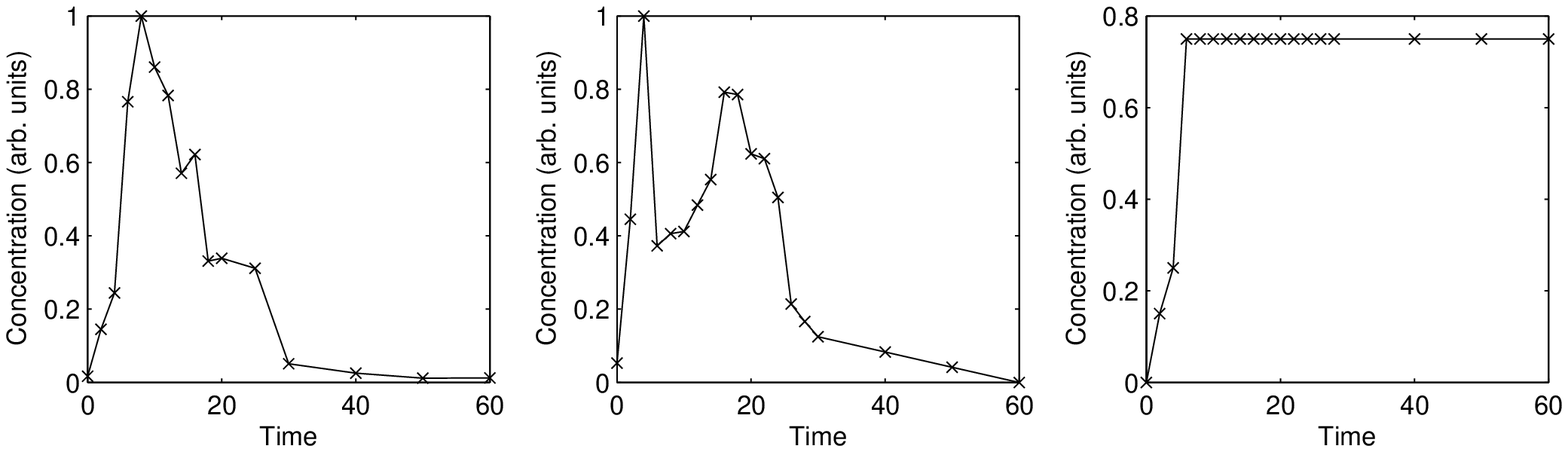}
\caption{Epo stimulation dynamics.  These time courses are used as (deterministic, externally fixed) inputs when verifying $\psi_1$, $\psi_2$ and $\psi_3$ respectively. Transient stimulation~\cite{swameye2003} (left), two rounds of transient stimulation~\cite{swameye2003} (center), sustained stimulation (right).}
\label{fig:epoinput}
\end{figure}

\end{document}